\documentclass[final,5p]{elsarticle}

\usepackage{amsmath,amssymb,amsthm}
\usepackage{mathtools}
\usepackage{url}
\usepackage{hyperref}
\usepackage{placeins}
\usepackage{algorithm}
\usepackage{algpseudocode}

\newtheorem{lemma}{Lemma}
\newtheorem{theorem}{Theorem}

\journal{Information Processing Letters}
\DeclareMathOperator{\supp}{supp}

\begin{document}

\begin{frontmatter}

\title{Linear KL-Optimal Frequency Normalisation}

\author{Kamila Szewczyk}
\address{Algorithmic Bioinformatics group, Saarland University,
Saarbr\"ucken, Germany}
\ead{szewczyk@cs.uni-saarland.de}

\begin{keyword}
Entropy coding \sep asymmetric numeral systems \sep
Kullback-Leibler divergence \sep integer normalisation \sep
separable concave optimisation
\end{keyword}

\begin{abstract}
Fast implementations of range coding and asymmetric numeral systems (ANS) owe their excellent performance to replacing slow division instructions by bit-shifts in their encoding and decoding algorithms. This is possible when the frequency distribution of symbols is normalised such that it sums to a power of two. However, such normalisation typically introduces a marginal increase in the Kullback-Leibler divergence between the original and the normalised distribution, leading to a worse compression ratio. We show that the currently used methods for frequency normalisation are suboptimal in both their running time and the achieved Kullback-Leibler divergence. We propose a new method for frequency normalisation that is asymptotically linear in the number of symbols and achieves the smallest possible Kullback-Leibler divergence between the original and the normalised distribution. The method is based on a solution to a separable concave optimisation problem, which may be of independent interest.
\end{abstract}

\end{frontmatter}

\section{Introduction}

Let $\mathcal X$ be a finite alphabet with symbol counts $c_x \in \mathbb{Z}_{\ge 0}$ for $x \in \mathcal X$. The total count is $T = \sum_{x \in \mathcal X} c_x$. The empirical law is thus given by $p_x = c_x / T$. Fix some $M \ge 1$, in practice a large power of two~\cite{Duda2013}, and let $q_x$ be a normalised frequency distribution such that $\sum_{x \in \mathcal X} q_x = 1$ and $q_x = m_x / M$ for some integer $m_x \in \mathbb{Z}_{\ge 0}$, with $m_x > 0$ if $c_x > 0$ (i.e., the support of $q$ is the same as that of $p$). Coding the source as described by the distribution $p$ with the normalised distribution $q$ results in a code that has at least the length of $\mathrm H(p) + \mathrm D(p \| q)$ nats per symbol, where $\mathrm H(p)$ is the entropy of $p$ and $\mathrm D(p \| q)$ is the Kullback-Leibler divergence between $p$ and $q$~\cite{KullbackLeibler1951,CoverThomas2006}:

\begin{equation}
\begin{aligned}
\mathrm D(p \| q) &= \sum_{x \in \mathcal X} p_x \log \frac{p_x}{q_x} = \sum_{x \in \mathcal X} (c_x / T) \log \frac{c_x / T}{m_x / M}\\
&= \sum_{x \in \mathcal X} (c_x / T) \log \frac{c_x M}{m_x T}.
\end{aligned}
\end{equation}

We wish to find such $m_x$ that minimises $\mathrm D(p \| q)$. This problem is equivalent to maximising $\Phi(m) = \sum_{x \in \supp \mathcal X} c_x \log m_x$, a separable concave integer allocation problem~\cite{FedergruenGroenevelt1986,Hochbaum1994}.

For a specific allocation, the unit-move gains (i.e., the gains from moving one quantum, $1/M$, from one symbol to another in the quantised distribution),

\begin{equation}
   \Delta^+_x(j)=c_x\log\tfrac{j+1}{j},\qquad
   \Delta^-_x(j)=c_x\log\tfrac{j}{j-1}
\end{equation}

are strictly decreasing in $j$, and a support-feasible $m$ is optimal iff

\begin{equation}\label{eq:mx}
   \min_{x\in \supp \mathcal{X}:\,m_x\ge2}\Delta^-_x(m_x)\;\ge\;
   \max_{y\in \supp \mathcal{X}}\Delta^+_y(m_y),
\end{equation}

thus a violation of optimality is a strictly improving unit transfer, and the converse follows from the standard exchange argument for concave allocation.

\section{Prior Art}

Each of the algorithms named below forms $s_x = Mp_x = c_x M / T$ for the ideal scaled frequencies, rounds it and attempts to restore $\sum_x m_x = M$ ad-hoc. While fast, none enforces the optimality condition~\eqref{eq:mx}.

Giesen~\cite{ryg_rans} rescales the empirical cumulative counts to total $M$ and
differences them, taking $m_x$ to be the increment of the rescaled cumulative
distribution at $x$, and fixes the support by giving one point of mass to the
items that now lie outside of it by transferring one quantum from the most likely
symbol. The first pass is done in $\mathcal{O}(|\mathcal X|)$, while the repair rescans for the largest symbol once per transferred quantum and thus costs $\mathcal{O}(|\mathcal X| b)$. assuming $b$ symbols are boosted to the
support floor. Bloom~\cite{Bloom2014ANS} is markedly better: the algorithm rounds each to the nearer of $\lfloor s_a\rfloor,\lfloor s_a\rfloor{+}1$ by a geometric-mean test and restores the total with a unidirectional heap
correction, in $\mathcal{O}\!\left(|\mathcal X|+\lvert\sum_a m^{(0)}_a-M\rvert\log
|\mathcal X|\right)$ time for the rounded vector $m^{(0)}$ (worst case
$\mathcal{O}(M\log |\mathcal X|)$). FSE~\cite{ColletFSE} performs the rounding with a small threshold table and assigns the unused
quanta to the most likely symbol, in $\mathcal{O}(|\mathcal X|)$ time (one Barrett
reciprocal~\cite{Barrett1986} and a few integer operations per symbol, the cheapest
normaliser in practice); the so-called ``perfect''
variant~\cite{Collet2014Perfect} runs a decrement heap from the ceiling $\lceil s_a\rceil$ in $\Theta\!\left((\sum_a\lceil s_a\rceil-M)\log |\mathcal X|\right)$
time. We point out that this algorithm is not KL-optimal either, as the ceiling envelope need not contain an optimum. Table~\ref{tab:wit} gives, for each algorithm, a small counterexample on which the optimality condition
\eqref{eq:mx} fails.

\begin{table}[t]
\centering
\caption{For each normaliser, a small input on which it is
KL-suboptimal, with the per-symbol KL gap (nats) to the optimum.}
\label{tab:wit}
\begin{tabular}{llll}
\hline
normaliser & witness & $M$ & gap \\
\hline
Giesen          & $(3,2)$               & $256$ & $6.3\times10^{-6}$ \\
Bloom           & $(3046,2582,4294)$    & $8$   & $2.6\times10^{-5}$ \\
FSE & $(10,3,3)$            & $8$   & $9.5\times10^{-3}$ \\
``Perfect''  & $(22,4,4,4,4,4,4,4,4)$& $16$  & $3.1\times10^{-3}$ \\
\hline
\end{tabular}
\end{table}

\section{An \texorpdfstring{$\mathcal{O}(|\mathcal X|)$}{O(|X|)} exact normaliser}

We first prove the existence of a tight window around the optimal solution, which allows us to restrict the search space to a small interval around the ideal scaled frequencies $s_x$.

\begin{lemma}\label{lem:win}
For every KL-optimal $m^\star$ and every $a\in \supp \mathcal{X}$,
\[
\begin{aligned}
   m^\star_a\;&\ge\;s_a-p_a(|\supp \mathcal X|-2)-1,\\
   m^\star_a\;&\le\;s_a+p_a(|\supp \mathcal X|-2)+1 .
\end{aligned}
\]
\end{lemma}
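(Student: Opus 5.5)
The plan is to read both inequalities off the optimality condition~\eqref{eq:mx}, which holds for every KL-optimal $m^\star$. First I would replace the logarithmic unit-move gains by rational functions using the elementary bounds
\[
\tfrac1j\le\log\tfrac{j}{j-1}\le\tfrac1{j-1},\qquad \tfrac1{j+1}\le\log\tfrac{j+1}{j}\le\tfrac1j .
\]
Throughout, write $n=|\supp\mathcal X|$ and $m=m^\star$, and recall $s_a=p_aM$ and $\sum_{y\in\supp\mathcal X}m_y=M$. If $n=1$, then $m_a=M=s_a$ and both bounds hold trivially, so assume $n\ge2$.

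\emph{Upper bound.} If $m_a\le 1$, the bound is immediate, since $s_a\ge0$ (in fact the support constraint forces $m_a=1$). Otherwise $m_a\ge2$, and \eqref{eq:mx} applies with $a$ on the left and every $y\ne a$ on the right. Combined with the log bounds it gives
\[
\frac{c_a}{m_a-1}\;\ge\;\Delta^-_a(m_a)\;\ge\;\Delta^+_y(m_y)\;\ge\;\frac{c_y}{m_y+1},
\]
so $m_y\ge c_y(m_a-1)/c_a-1$. Summing over the $n-1$ symbols $y\ne a$ gives
\[
M-m_a\;\ge\;\frac{(T-c_a)(m_a-1)}{c_a}-(n-1).
\]
Multiplying by $p_a=c_a/T$ yields $s_a-p_am_a\ge(1-p_a)(m_a-1)-p_a(n-1)$, which rearranges to $m_a\le s_a+p_a(n-2)+1$.

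\emph{Lower bound.} Here the roles are symmetric, with $a$ on the right of \eqref{eq:mx} and every $y\ne a$ on the left. For $y\ne a$ with $m_y\ge2$, the chain
\[
\frac{c_y}{m_y-1}\;\ge\;\Delta^-_y(m_y)\;\ge\;\Delta^+_a(m_a)\;\ge\;\frac{c_a}{m_a+1}
\]
gives $m_y-1\le c_y(m_a+1)/c_a$. For $m_y=1$ the same inequality holds trivially, because its left side is $0$. Summing over $y\ne a$ gives
\[
M-m_a-(n-1)\;\le\;\frac{(T-c_a)(m_a+1)}{c_a}.
\]
Multiplying by $p_a$ then gives $s_a-p_am_a-p_a(n-1)\le(1-p_a)(m_a+1)$, which rearranges to $m_a\ge s_a-p_a(n-2)-1$.

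The only delicate points are bookkeeping. First, \eqref{eq:mx} constrains only symbols with $m\ge2$ on the decrement side, so the cases $m_a\le1$ (upper bound) and $m_y=1$ (lower bound) must be handled separately; both are harmless, as shown above. Second, the constant $p_a(n-2)$ rather than $p_a(n-1)$ comes from the $\pm1$ slack in the rational bounds combining with the $(1-p_a)$ factor. I expect no step to be a genuine obstacle beyond choosing the correct direction of each logarithm bound.
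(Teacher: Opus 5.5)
Your proof is correct and follows essentially the paper's argument. Both use the bounds $\log(1+x)<x$ and $\log(1+x)>x/(1+x)$ to reduce \eqref{eq:mx} to comparing $c_a/(m_a-1)$ with $c_y/(m_y+1)$, and aggregate over $y\ne a$ via $\sum_y m_y=M$ and $\sum_y c_y=T$; the paper argues by contradiction and finds one violating $b$ by averaging, you sum directly, and your explicit lower bound (including $m_y=1$) and $n=1$ cases fill in what the paper calls ``symmetric''.
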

\begin{proof}
We prove the upper bound. The lower bound is symmetric. Suppose
$m^\star_a>p_a(M+|\supp \mathcal X|-2)+1$. Then
$\sum_{b\in \supp \mathcal{X}\setminus\{a\}}(m^\star_b+1)=M-m^\star_a+|\supp \mathcal X|-1<(1-p_a)(M+|\supp \mathcal X|-2)$,
so, as $\sum_{b\in \supp \mathcal{X}\setminus\{a\}}p_b=1-p_a$, some $b\in \supp \mathcal{X}\setminus\{a\}$
satisfies
$(m^\star_b+1)/p_b<M+|\supp \mathcal X|-2<(m^\star_a-1)/p_a$, i.e.
$c_b/(m^\star_b+1)>c_a/(m^\star_a-1)$. Using $\log(1+x)<x$ and
$\log(1+x)>x/(1+x)$ for $x>0$,
\begin{equation}
\begin{aligned}
   \Delta^-_a(m^\star_a)&=c_a\log\tfrac{m^\star_a}{m^\star_a-1}
   <\tfrac{c_a}{m^\star_a-1}\\
   &<\tfrac{c_b}{m^\star_b+1}
   <c_b\log\tfrac{m^\star_b+1}{m^\star_b}=\Delta^+_b(m^\star_b),
\end{aligned}
\end{equation}
which violates \eqref{eq:mx}. Hence
$m^\star_a\le p_a(M+|\supp \mathcal X|-2)+1=s_a+p_a(|\supp \mathcal X|-2)+1$.
\end{proof}

Let $L_a=\max\{1,\lceil s_a-p_a(|\supp \mathcal X|-2)-1\rceil\}$ and
$U_a=\lfloor s_a+p_a(|\supp \mathcal X|-2)+1\rfloor$. By Lemma~\ref{lem:win} every
optimum lies in $L_a\le m_a\le U_a$, whose total width is
\begin{equation}
\begin{aligned}
   \sum_{a\in \supp \mathcal{X}}(U_a-L_a)&\le\sum_{a\in \supp \mathcal{X}}\bigl(2p_a(|\supp \mathcal X|-2)+2\bigr)\\&=4|\supp \mathcal X|-4 .
\end{aligned}
\end{equation}
Since some optimum has $\sum_{x\in \supp \mathcal{X}}m_x = M$, $0\le D:=\sum_a U_a-M\le\sum_a(U_a-L_a)$.
For any $m$ with $L_a\le m_a\le U_a$ for all $a$, the loss relative to $U$ is
$\Phi(U)-\Phi(m)=\sum_{a}\sum_{j=m_a+1}^{U_a}\Delta^-_a(j)$: lowering $m_a$ from
$U_a$ spends the tickets $\Delta^-_a(j)$ for $j=m_a+1,\ldots,U_a$. Since
$\Delta^-_a(\cdot)$ is decreasing, the cheapest tickets of a row are those with
the largest $j$, so any $D$ globally cheapest tickets form, within each row, a
run ending at $U_a$; they therefore correspond to a vector $m$ obeying the window
and summing to $M$, and it is the one of least loss, hence a global optimum. Algorithm~\ref{alg:win} emits the at most
$4|\supp \mathcal X|-4$ tickets and picks the $D$ cheapest by linear-time selection.

\begin{algorithm}[t]
\caption{Linear KL-optimal normalisation}
\label{alg:win}
\begin{algorithmic}[1]
\Require counts $c$, total $M$
\State $S\gets\{a:c_a>0\}$, $r\gets|S|$, $T\gets\sum_{a\in S}c_a$
\If{$M<r$} \Return ``no finite-KL solution'' \EndIf
\State $\mathcal T\gets\emptyset$, $D\gets-M$
\For{$a\in S$}
   \State $L_a\gets\max\{1,\lceil Mc_a/T-(c_a/T)(r-2)-1\rceil\}$
   \State $U_a\gets\lfloor Mc_a/T+(c_a/T)(r-2)+1\rfloor$
   \State $m_a\gets U_a$;\quad $D\gets D+U_a$
   \For{$j=L_a+1,\ldots,U_a$} add ticket $(\Delta^-_a(j),a)$ to $\mathcal T$ \EndFor
\EndFor
\State select the $D$ cheapest tickets of $\mathcal T$; decrement $m_a$ for each
\State \Return $m$
\end{algorithmic}
\end{algorithm}

\begin{theorem}\label{thm:main}
Algorithm~\ref{alg:win} returns an optimal frequency vector in
$\mathcal{O}(|\mathcal{X}|)$ time under constant time ticket comparisons, and this is asymptotically optimal in terms of time.
\end{theorem}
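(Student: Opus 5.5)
The proof splits into correctness, running time, and a matching lower bound. We treat them in that order and lean on Lemma~\ref{lem:win} and the exchange argument already given in the text.

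\emph{Correctness.} First we deal with the degenerate case. If $M<r$ no support-feasible vector exists, since each $m_a\ge1$, and the algorithm reports this. Otherwise some feasible $m$ exists. $\Phi$ is continuous on a finite feasible set, so an optimum $m^\star$ exists. By Lemma~\ref{lem:win} it satisfies $L_a\le m^\star_a\le U_a$ for every $a\in S$; the extra constraint $L_a\ge1$ is automatic from support feasibility. Summing gives $\sum_a L_a\le M\le\sum_a U_a$, so $0\le D\le|\mathcal T|$ and the selection step is well defined.

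Next we show the output is optimal. Every window-feasible $m$ with $\sum m_a=M$ corresponds to a set of exactly $D$ tickets: the top $U_a-m_a$ tickets of row $a$. By telescoping, its loss is $\Phi(U)-\Phi(m)=\sum_a\sum_{j=m_a+1}^{U_a}\Delta^-_a(j)$. Because $\Delta^-_a$ is strictly decreasing in $j$, any $D$ globally cheapest tickets are closed upward within each row. Ties across rows need care here: selection may break them arbitrarily, but within a row the values are strictly ordered, so any tie-break still yields an upward-closed set. That set therefore encodes a window-feasible vector whose loss is the minimum over all $D$-subsets. This minimum is at most the loss of $m^\star$, so the output attains $\Phi(m^\star)$ and is optimal.

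\emph{Running time.} Computing $S$, $r$ and $T$ takes one pass. Each $L_a$ and $U_a$ costs $O(1)$ arithmetic operations. The number of tickets is $\sum_a(U_a-L_a)$. Clamping at $1$ only raises $L_a$, so this sum is at most $4r-4=O(|\mathcal X|)$, as computed above. Linear-time selection (median of medians) finds the $D$ cheapest tickets in $O(|\mathcal T|)$ comparisons. We then partition around the $D$-th smallest element, keeping exactly $D$ tickets when ties occur, and apply the decrements in $O(|\mathcal T|)$. The total is $O(|\mathcal X|)$ under the stated cost model.

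\emph{Lower bound.} Any correct algorithm must read every count $c_x$. Suppose it leaves some $c_x$ unread. An adversary can then set that count to $0$ or to a positive value without changing anything the algorithm has seen. Feasibility requires $m_x=0$ in the first case and $m_x\ge1$ in the second, so a single output cannot be correct for both. In the second case one must also check that $M\ge r$ still holds, so that a feasible answer exists; taking $M$ large makes this automatic. Hence $\Omega(|\mathcal X|)$ time is necessary.

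The step I expect to be the main obstacle is the correctness argument's reliance on floating or real-valued ticket comparisons and on the rounding in $L_a$ and $U_a$. The plan handles this by assuming exact constant-time comparisons, as the theorem stipulates, and by using only the integrality of $m^\star_a$ to pass from the real bounds of Lemma~\ref{lem:win} to the floored and ceiled ones.
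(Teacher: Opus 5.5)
Your proposal is correct and takes the same route as the paper: the window from Lemma~\ref{lem:win} (tightened to $L_a,U_a$ by integrality), the $4r-4$ width bound, the telescoping ticket decomposition whose cheapest $D$-subsets are upward-closed within each row, and linear-time selection. What you add is a remark on cross-row ties and an explicit adversary argument for the $\Omega(|\mathcal X|)$ lower bound, which the paper only asserts; both are sound.
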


This algorithm specialises the linear-time selection approach to divisor
methods of apportionment~\cite{ChengEppstein2014,ReitzigWild2023} to
a logarithmic objective. Its running time matches the linear heuristics discussed above, while improving
on Collet-perfect's $\Theta\!\left((\sum_a\lceil s_a\rceil-M)\log |\mathcal X|\right)$
and Bloom's $\mathcal{O}(M\log |\mathcal X|)$, and being the only \textit{exact}
method among them.

Regardless, we note that floating-point evaluation of the tickets $\Delta^-_a$ via $\log(j/(j-1))$ suffices in the count regimes of practical interest, but may become expensive if high precision is required or $j$ is large; in these cases an
exact transcendental-free comparator $j^{c_a}(\ell-1)^{c_b}\gtreqless(j-1)^{c_a}\ell^{c_b}$ is available.

\section{Worst-case cost of normalisation}

The bound of Lemma~\ref{lem:win} offers a good worst-case guarantee on the inherent KL divergence of the optimal normalisation. Let $\rho(p,M)=\min_m D(p\,\|\,q)$ be the optimal quantisation redundancy over feasible $m$. One easy estimate is interior and $\Theta(1/M^2)$: for $p$ in the interior of the simplex, each $s_a$ may be favourably rounded to within $\tfrac12$, and the well-known quadratic expansion $D(p\,\|\,q)=\tfrac12\sum_a(p_a-q_a)^2/p_a+o(\lVert p-q\rVert^2)$, whose leading term is the $\chi^2$ divergence and hence the second-order term of the order-2 Rényi divergence, gives $1/M^2$. The worst case over a fixed support is larger:

\begin{theorem}
\label{thm:redundancy}
For all $r\ge1$ and $M\ge r$,
\begin{equation}
\begin{aligned}
   \sup_{p:\,|\supp \mathcal X|=r}\rho(p,M)
   &=\log\frac{M}{M-r+1} \\
   &=\frac{r-1}{M}+\mathcal{O}\!\left((r/M)^2\right)\ \text{nats},
\end{aligned}
\end{equation}
and the supremum is approached but not attained. The same value results when $m$
is allowed to range over the reals subject to $q_a\ge1/M$, so integrality costs
nothing in the worst case.
\end{theorem}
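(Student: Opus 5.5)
We have to show that the worst-case optimal redundancy equals $\log\frac{M}{M-r+1}$. The plan is an upper bound valid for every $p$ (using the real relaxation), then a family of $p$ whose integer optimum approaches that value, plus a check that the bound is never attained. Throughout, $\rho(p,M)$ is the minimum over integer $m$ with $m_a\ge1$ and $\sum m_a=M$. The real relaxation $\rho_{\mathbb R}(p,M)$ minimises over reals $q_a\ge 1/M$ with $\sum_a q_a=1$, and clearly $\rho\ge\rho_{\mathbb R}$.

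\textbf{Upper bound.} I would bound $\rho(p,M)$ by exhibiting one feasible integer vector. Take $m_a=1+\lfloor (M-r)p_a\rfloor$ and distribute the remaining $M-\sum_a m_a\ge0$ units arbitrarily. Then $m_a\ge 1+(M-r)p_a-1+\epsilon$ is not quite enough, so it is cleaner to bound via a convexity argument. Since $\mathrm D(p\|q)=\sum_a p_a\log(p_a/q_a)$ and $\log$ is concave, Jensen gives
\[
\mathrm D(p\|q)=-\sum_a p_a\log\frac{q_a}{p_a}\le -\log\Bigl(\min_a \frac{q_a}{p_a}\Bigr).
\]
It therefore suffices to find an integer $m$ with $m_a/M\ge p_a(M-r+1)/M$ for all $a$, with at least one inequality strict. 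Let $m_a\ge\lceil (M-r+1)p_a\rceil$ for every $a$. Since $\lceil y\rceil<y+1$, we get $\sum_a\lceil (M-r+1)p_a\rceil< M-r+1+r=M+1$, so the sum is $\le M$. Each ceiling is $\ge1$ because $p_a>0$. The remaining units can be added anywhere, which only increases $q_a/p_a$. This yields $\rho(p,M)\le\log\frac{M}{M-r+1}$. For strictness I will argue that equality in Jensen would force $q_a/p_a$ to be constant, equal to $(M-r+1)/M<1$, which contradicts $\sum q_a=\sum p_a=1$ when $r\ge2$. For $r=1$ both sides are $0$ and the claim is trivial. Hence the supremum is not attained.

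\textbf{Lower bound (approaching the supremum).} I would take $p^{(\varepsilon)}$ with $r-1$ symbols of mass $\varepsilon$ and one of mass $1-(r-1)\varepsilon$. In the real relaxation, the small symbols are pinned at $q=1/M$ once $\varepsilon<1/M$, by the KKT/water-filling structure. The big symbol then receives $q=(M-r+1)/M$, so
\[
\rho_{\mathbb R}\ge(1-(r-1)\varepsilon)\log\frac{1-(r-1)\varepsilon}{(M-r+1)/M}+(r-1)\varepsilon\log(\varepsilon M).
\]
As $\varepsilon\to0$ this tends to $\log\frac{M}{M-r+1}$, because $\varepsilon\log\varepsilon\to0$. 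Here the minimiser over the relaxation must be verified: the objective is convex, and the constraint $q_a\ge1/M$ is active for small $p_a$. Even without that verification, every feasible $q$ has $q_{\text{big}}\le (M-r+1)/M$, and the small-symbol terms contribute $\ge (r-1)\varepsilon\log(\varepsilon M)\to0$. This gives the lower bound directly for both $\rho$ and $\rho_{\mathbb R}$.

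\textbf{Remaining points.} The real-relaxation statement follows from the sandwich $\rho_{\mathbb R}\le\rho\le\log\frac{M}{M-r+1}$ together with the same approaching family. The expansion $\log\frac{M}{M-r+1}=-\log(1-\frac{r-1}{M})=\frac{r-1}{M}+\mathcal O((r/M)^2)$ is a Taylor expansion, valid uniformly for $M\ge r$ since $(r-1)/M\le 1-1/M$ is bounded away from $1$ only up to constants. This needs a brief remark for $r\approx M$, where the $\mathcal O$ constant is absorbed since $\log M\le$ const$\cdot (r/M)^2\cdot M$ fails. So I expect to restrict the asymptotic form to the regime $r=o(M)$. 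The main obstacle is confirming the ceiling-sum bound $\sum\lceil (M-r+1)p_a\rceil\le M$ and the strictness argument. Both are short once set up as above.
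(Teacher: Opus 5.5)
Your lower bound, as written, is not established. The KKT route is only asserted (you say yourself it ``must be verified''), and the fallback meant to make it unnecessary has the inequality backwards. Every feasible $q$ has $q_a\ge1/M$, hence $\varepsilon\log(\varepsilon/q_a)\le\varepsilon\log(\varepsilon M)$. So $(r-1)\varepsilon\log(\varepsilon M)$ is an \emph{upper} bound on the cold terms, with equality only when every cold mass equals $1/M$. You cannot combine ``$q_{\text{big}}\le(M-r+1)/M$'' with that estimate independently: enlarging the cold masses lowers the cold terms while raising the hot one. The paper handles exactly this trade-off. It collapses the cold terms with the log-sum inequality into a function of the cold mass $Q$, notes that this function increases for $Q\ge(r-1)\varepsilon$, and so, for $\varepsilon<1/M$, evaluates it at the endpoint $Q=(r-1)/M$. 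Since only the limit matters, a cruder repair also works: use $q_a\le1$, so each cold term is at least $\varepsilon\log\varepsilon$, and for every real feasible $q$
\[
\mathrm D(p_\varepsilon\|q)\ge\bigl(1-(r-1)\varepsilon\bigr)\log\frac{(1-(r-1)\varepsilon)M}{M-r+1}+(r-1)\varepsilon\log\varepsilon\;\longrightarrow\;\log\frac{M}{M-r+1}.
\]
This bounds $\rho_{\mathbb R}$ and hence $\rho$. Alternatively, actually verify the KKT claim: the relaxed minimiser is $q_a=\max\{p_a/\lambda,1/M\}$, and for $p_\varepsilon$ the cold symbols are pinned precisely when $\varepsilon\le1/M$.

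Your upper bound is the paper's ceiling construction. Your strictness argument is a genuine addition, since the paper does not argue non-attainment explicitly, and it is correct for $r\ge2$. Note, though, that $-\sum_a p_a\log(q_a/p_a)\le-\log\min_a(q_a/p_a)$ is termwise monotonicity, not Jensen; Jensen gives the reverse, $\mathrm D\ge0$. For $r=1$ the supremum $0$ is attained, so the ``not attained'' clause is false there, not trivial. Your caveat on the expansion is right but garbled. At $r=M$ the left side is $\log M$, so $\mathcal O((r/M)^2)$ cannot hold with a uniform constant, and the expansion is only meaningful as $r/M\to0$; say that plainly. Also drop the abandoned first attempt with $1+\lfloor(M-r)p_a\rfloor$.
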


\begin{proof}
\emph{Upper bound.} For any $p$ the allocation $x_a=1+p_a(M-r)$ is feasible
($x_a\ge1$, $\sum_{a\in \supp \mathcal X}x_a=M$) and
$x_a=(1-p_a)+p_a(M-r+1)\ge p_a(M-r+1)$, so $q_a=x_a/M\ge p_a(M-r+1)/M$ and
$p_a/q_a\le M/(M-r+1)$; hence $D(p\,\|\,q)\le\log\frac{M}{M-r+1}$. Rounding
to $m_a=\lceil p_a(M-r+1)\rceil$ keeps $m_a\ge p_a(M-r+1)$ and
$\sum_a m_a<M+1$, hence $\sum_a m_a\le M$; assigning the remaining units to any
symbol only raises its $q_a$ and so only lowers $D$, and the bound holds for
integer frequencies.
\emph{Lower bound.} Consider the ``almost hot'' distribution
$p_\varepsilon=(1-(r-1)\varepsilon,\varepsilon,\dots,\varepsilon)$, in which one
symbol carries almost all the mass, and any feasible $q$; its ``almost-cold''
mass satisfies $Q=\sum_{a\ge2}q_a\ge(r-1)/M$, since each of the $r-1$ cold
symbols needs $m_a\ge1$. The log-sum
inequality bounds the ``almost-cold'' terms below by
$(r-1)\varepsilon\log\frac{(r-1)\varepsilon}{Q}$, so with $q_1=1-Q$,
\begin{equation}
\begin{aligned}
   D(p_\varepsilon\,\|\,q)\ge
   (1-(r-1)\varepsilon)&\log\frac{1-(r-1)\varepsilon}{1-Q}\\
   +(r-1)\varepsilon&\log\frac{(r-1)\varepsilon}{Q}.
\end{aligned}
\end{equation}
As a function of $Q$ the right side is minimised at $Q=(r-1)\varepsilon$ and
increases thereafter, so for $\varepsilon<1/M$ it is increasing on the feasible
range $Q\ge(r-1)/M$ and is minimised at the endpoint $Q=(r-1)/M$, and there it tends
to $\log\frac{M}{M-r+1}$ as $\varepsilon\to0^+$. That endpoint is realised by the
integer allocation $(M-r+1,1,\dots,1)$, so the limit bounds $\rho$ as well.
\end{proof}

\section{Conclusion}
\label{sec:conclusion}
Integer frequency normalisation for entropy coders is quickly and exactly solvable through methods familiar from separable concave allocation. A two-sided window of width $4|\supp \mathcal X|-4$ around the ideal frequencies contains every optimum, immediately yielding an exact KL-optimal normaliser linear in the support size and provably correct on every input, unlike the existing heuristics; the tools used also bound the unavoidable redundancy to $(|\supp \mathcal X|-1)/M$ nats plus negligible terms in the worst case.

\bibliography{refs}

\end{document}